\documentclass[a4paper,11pt]{article}

\usepackage{jheppub}

\makeatletter
\gdef\@fpheader{}
\makeatother

\usepackage{color}

\usepackage{comment}

\usepackage{tikz-cd}

\usepackage[bibtex-style]{amsrefs}
\usepackage{amsthm}

\newcommand{\doi}[1]{\href{https://doi.org/#1}{#1}}

\usepackage[shortlabels]{enumitem}
\setlist{noitemsep}

\newtheorem*{theorem}{Theorem}
\newtheorem*{lemma}{Lemma}
\newtheorem*{corollary}{Corollary}
\newtheorem*{definition*}{Definition}

\newcommand{\half}{\tfrac12}

\newcommand{\<}{\mathopen{\{\!\!\{}}
\renewcommand{\>}{\mathclose{\}\!\!\}}}
\renewcommand{\(}{\mathopen{(\!(}}
\renewcommand{\)}{\mathclose{)\!)}}
\renewcommand{\[}{\mathopen{[\![}}
\renewcommand{\]}{\mathclose{]\!]}}

\newcommand{\CD}{\mathcal{D}}

\newcommand{\CO}{\mathcal{O}}

\renewcommand{\S}{\mathsf{S}}
\newcommand{\D}{\mathsf{D}}

\newcommand{\R}{\mathbb{R}}
\newcommand{\g}{\mathfrak{g}}
\newcommand{\h}{\mathfrak{h}}

\newcommand{\C}{\mathbf{C}}
\newcommand{\M}{\mathbf{M}}

\newcommand{\Sym}{S}

\DeclareMathOperator{\End}{End}

\DeclareMathOperator{\so}{\mathbf{so}}

\title{The Moyal cohomology of the {\boldmath$N=1$} spinning particle}

\author{Ezra Getzler}

\affiliation{Department of Mathematics, Northwestern University,
  Evanston, IL 60208, USA}

\emailAdd{getzler@northwestern.edu}

\abstract{The Batalin--Fradkin--Vilkovisky formalism studies a
  differential graded symplectic supermanifold whose cohomology
  vanishes in negative degree. This hypothesis is violated by the
  $N=1$ spinning particle; its cohomology is nontrivial in all
  negative degrees \citep{curved}.  Replacing the Poisson bracket by
  the Moyal bracket has the effect of eliminating these cohomology
  classes.}

\keywords{Spinning particle, Batalin-Vilkovisky formalism, Moyal product}

\notoc

\begin{document}

\maketitle

\clearpage

\section{Introduction}

\subsection{The {\boldmath$N=1$} spinning particle}

In this article, we study the $N=1$ spinning particle in a background
Lorentzian metric and magnetic field. We work in an open simply
connected subset $U$ of Minkowski space $\R^{D-1,1}$. Let $\eta_{ab}$ be
the Minkowski metric in the standard orthonormal basis
$\{v_a\mid 1\le a\le D\}$. The background Lorentzian metric is expressed by
a moving frame
\begin{equation*}
  e^a = e^a_\mu \, dx^\mu \in \Omega^1(U,\R^{D-1,1}) ,
\end{equation*}
and the background magnetic field by a one-form
$A = A_\nu \, dx^\mu \in \Omega^1(U)$. The Levi-Civita connection
\begin{equation*}
  \omega^a{}_b = \omega^a_{\mu b} \, dx^\mu \in \Omega^1(U,\so(D-1,1)) ,
\end{equation*}
is determined from $e$ by Cartan's first structure equation
\begin{equation*}
  de^a + \omega^a{}_b \wedge e^b = 0 ,
\end{equation*}
and the curvature is determined by Cartan's second structure equation
\begin{equation*}
  R^a{}_b = d\omega^a{}_b + \omega^a{}_c \wedge \omega^c{}_b \in \Omega^2(U,\so(D-1,1)) .
\end{equation*}
Set
\begin{align*}
  \omega_{ab} &= \eta_{ac} \, \omega^c{}_b = \omega_{\mu ab} \, dx^\mu , &
  R_{ab} &= \eta_{ac} \, R^c{}_b = \half R_{\mu\nu ab} \, dx^\mu \wedge dx^\nu .
\end{align*}
The magnetic field $B=dA$ is the differential of $A$:
\begin{equation*}
  B = \half B_{\mu\nu} dx^\mu \wedge dx^\nu \in \Omega^2(U) .
\end{equation*}

Let $M$ be the supermanifold $M=T^*U\times\Pi\R^{D-1,1}$, with coordinates
$\{x^\mu,p_a,\theta^a\}$.  The differential $\Omega=d\lambda$ of the one-form
\begin{equation*}
  \lambda = p_a e^a + A + \half \omega_{ab} \theta^a \theta^b - \half \eta_{ab} \theta^a d\theta^b \in
  \Omega^1(M)
\end{equation*}
is a symplectic form on $M$:
\begin{align*}
  \Omega &= dp_a \wedge e^a + p_a \, de^a - B + \half \omega_{ab} \, \theta^a d\theta^b + \half
      d\omega_{ab} \theta^a \theta^b - \half \eta_{ab} d\theta^a \wedge d\theta^b \\
    &= dp_a \wedge e^a - p_a \omega^a{}_b \wedge e^b + B + \half \omega_{ab} \, \theta^a d\theta^b +
      \half \bigl( R_{ab} - \omega_a{}^c \wedge \omega_{cb} \bigr) \theta^a \theta^b
      - \half \eta_{ab} d\theta^a \wedge d\theta^b .
\end{align*}
The associated Poisson bracket on $M$ equals
\begin{align*}
  & \begin{aligned}
    & \{\theta^a,\theta^b\} = \eta^{ab} , &
    & \{p_a,x^\mu\} = f_a^\mu , &
    & \{p_a,\theta^b\} = - f^\mu_a \omega_{\mu c}^b \theta^c ,
  \end{aligned} \\
  & \{p_a,p_b\} = \bigl( f^\mu_a \omega_{\mu b}^c - f^\mu_b \omega_{\mu a}^c \bigr)
    p_c - f^\mu_a f^\nu_b \bigl( \half R_{\mu\nu cd} \theta^c \theta^d + B_{\mu\nu} \bigr) ,
\end{align*}
(We use the negative of the classical Poisson bracket in this article,
ensuring that the Hamiltonian vector field $X_a$, characterized by
$\iota(X_a)\omega=-da$, satisfies $X_a(b)=\{a,b\}$.)

The odd parity function $q=p_a\theta^a$ has Poisson brackets
\begin{align*}
  \{q,x^\mu\} &= f_a^\mu \theta^a , \quad
  \{q,\theta^a\} = \delta^{ab} p_b - f^\mu_b \omega_{\mu c}^a \theta^b \theta^c , \\ \notag
  \{q,p_a\} &= \bigl( f^\mu_b \omega_{\mu a}^c - f^\mu_a \omega_{\mu b}^c \bigr) \theta^b
  p_c - f_b^\mu f_a^\nu B_{\mu\nu} \theta^b .
\end{align*}
In particular, its Poisson bracket with itself is given by the formula
\begin{equation*}
  h = \half \{q,q\} = \half \eta^{ab}p_ap_b - \half f^\mu_a f^\nu_b B_{\mu\nu} \theta^a\theta^b .
\end{equation*}

If $M$ is a symplectic supermanifold and $\g$ is a finite-dimensional
Lie superalgebra, a moment map $\mu:M\to\g^\vee$ is a Poisson map from
$M$ to the Poisson supermanifold $\g^\vee$. In this article, we only
consider moment maps for the Lie superalgebra $\h$ governing $N=1$
supersymmetric quantum mechanics: this is the free Lie superalgebra
generated by a single odd generator $Q$, spanned by $Q$ and
$H=\half[Q,Q]$. Such a moment map is determined by an odd function $q$
on $M$, which generates the vector field $Q=\{q,-\}$. The bracket
$h=\half\{q,q\}$ generates the vector field
\begin{align*}
  H &= \half[Q,Q] \\
    &= \{h,-\} = \{q,\{q,-\}\} = \half \{\{q,q\},-\} .
\end{align*}

\subsection{Differential graded Poisson superschemes}

The algebra of functions on a differential graded (dg) superscheme
$\M$ is a differential graded (dg) superalgebra $\CO$, filtered by
powers of the dg ideal $\CO_+=F^1\CO$ generated by elements of $\CO$
of strictly positive degree. We assume that this filtration is
complete:
\begin{equation*}
  \CO = F^0\CO \supset F^1\CO \supset \cdots
\end{equation*}

A dg Poisson superscheme $\M$ is a dg superscheme whose dg
superalgebra $\CO$ of functions is a dg Poisson superalgebra, and the
differential $\delta=\{S,-\}$ is a Hamiltonian vector field, with
Hamiltonian function $S\in\CO_+$. Since
\begin{equation*}
  \{F^i\CO,F^j\CO\}\subset F^{i+j-1}\CO ,
\end{equation*}
we see that $\CO_+=F^1\CO$ is a coisotropic ideal: it is closed under
Poisson bracket.

The vanishing locus of the coisotropic ideal $\CO_+$ is a coisotropic
dg superscheme $\M_-\subset\M$. The dg superalgebra of functions
$\CO/\CO_+$ of the dg superscheme $\M_-$ is concentrated in
nonpositive degree. If $\M_-$ is smooth, it is a dg supermanifold of
the type introduced by Tate \citep{Tate} in the ungraded case. These
are the basic objects of derived geometry. The dg superscheme $\M$ may
be thought of as an $L_\infty$-algebroid over $\M_-$, with
Chevalley-Eilenberg complex $\CO$.

Let $\CO_-\subset\CO$ be the ideal generated by elements of $\CO$ of
negative degree and their differentials: it suffices to take the
differentials of functions of degree $-1$. This is a coisotropic
ideal, and its vanishing locus is a coisotropic dg superscheme
$\M_+\subset\M$. The dg superalgebra of functions $\CO/\CO_-$ of the dg
superscheme $\M_+$ is concentrated in nonnegative degree.

The vanishing locus $\M_0=\M_+\cap\M_-$ of the ideal $\CO_++\CO_-$ is a
superscheme whose superalgebra of functions is
$\CO/(\CO_++\CO_-)$. This superalgebra may be identified with the
zeroth cohomology of $\CO/\CO_+$, and hence is the superalgebra of
functions on the critical locus $\pi^0(\M_-)$ of the dg superscheme
$\M_-$.

The following observation lies at the heart of the
Batalin--Fradkin--Vilkovisky (BFV) formalism.
\begin{lemma}
  The following conditions are equivalent:
  \begin{enumerate}[a)]
  \item the cohomology of $\CO$ vanishes in negative degree;
  \item the cohomology of $\CO/\CO_+$ vanishes in negative degree;
  \item the morphism $\CO/\CO_+\to\CO/(\CO_++\CO_-)$ is a
    quasi-isomorphism.
  \end{enumerate}
\end{lemma}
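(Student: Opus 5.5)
The plan is to use the filtration $F^i\CO=\CO_+^i$ of $\CO$ by powers of $\CO_+$. Since $\delta\CO_+\subset\CO_+$ and $\delta$ is a derivation, each $F^i\CO$ is a subcomplex, and the filtration is complete by hypothesis. The argument rests on a single degree-counting observation. A product of $i$ elements of strictly positive degree has degree at least $i$. Hence $F^i\CO$ is built out of elements of $\CO/\CO_+$ (degree $\le0$) multiplied by monomials of positive weight at least $i$. Concretely, $\mathrm{gr}^i\CO=F^i\CO/F^{i+1}\CO$ is a module over $\CO/\CO_+$ generated in degrees $\ge i$. In the smooth (Tate) case it is $\CO/\CO_+\otimes \Sym^i(\text{positive generators})$.

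\textbf{(a)$\Leftrightarrow$(b).} I would first record that $\CO/\CO_+=\mathrm{gr}^0\CO$.

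For (a)$\Rightarrow$(b), consider $F^1\CO$. Its cohomology in degrees $\le0$ is killed degreewise: for fixed degree $n\le 0$, only finitely many graded pieces are relevant, and I would argue that $F^1\CO$ has vanishing cohomology in nonpositive degree. This needs care, since elements of negative degree can multiply positive ones. The right tool is the exact sequence $0\to\CO_+\to\CO\to\CO/\CO_+\to0$. Its long exact sequence gives $H^{n}(\CO)\to H^n(\CO/\CO_+)\to H^{n+1}(\CO_+)$.

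For (b)$\Rightarrow$(a) I would run the spectral sequence of the complete filtration. Each $\mathrm{gr}^i\CO$ is $\CO/\CO_+$ tensored, over $\CO/\CO_+$, with generators in degree $\ge i$. Its differential is the $\CO/\CO_+$-differential (up to higher filtration terms), so $H^n(\mathrm{gr}^i\CO)$ involves $H^{n-k}(\CO/\CO_+)$ with $k\ge i$. If $n<0$, then $n-k<0$, so these groups vanish by (b). Convergence then comes from completeness together with the fact that in a fixed degree the filtration is exhaustive, which gives $H^{n}(\CO)=0$ for $n<0$.

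For the converse (a)$\Rightarrow$(b) I would argue by contradiction on the spectral sequence. Take the maximal $n<0$ with $H^n(\CO/\CO_+)\neq0$. The classes in $E_1^{0,n}$ can only be hit or killed by differentials landing in or coming from $\mathrm{gr}^{i}$, $i\ge1$, in degrees $n\pm1$. Those groups involve $H^{m}(\CO/\CO_+)$ with $m\le n+1-i\le n$, and these do not interact in a way that kills the class. This is the step I expect to be the main obstacle. A cleaner route would be to split $\CO\simeq\CO/\CO_+\oplus\CO_+$ using the Tate model, but that is not available without smoothness, so the bookkeeping of the spectral sequence edge map $H^n(\CO)\to H^n(\CO/\CO_+)$ has to be done carefully.

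\textbf{(b)$\Leftrightarrow$(c).} Here I would use the statement preceding the lemma that $\CO/(\CO_++\CO_-)=H^0(\CO/\CO_+)$. The complex $\CO/\CO_+$ is concentrated in nonpositive degrees, so $H^0$ is its top cohomology. The natural map $\CO/\CO_+\to H^0(\CO/\CO_+)$ therefore induces the identity on $H^0$, and it is a quasi-isomorphism exactly when $H^n(\CO/\CO_+)=0$ for all $n<0$.
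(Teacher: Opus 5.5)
\paragraph{Overall.} Your (b)$\Leftrightarrow$(c) argument is correct and complete, and (b)$\Rightarrow$(a) has the right shape. The paper states this lemma without proof, so I am judging your argument on its own. The genuine gap is (a)$\Rightarrow$(b), and it is not a matter of bookkeeping.

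\paragraph{Why (a)$\Rightarrow$(b) does not go through.} Your long exact sequence needs $H^{n+1}(\CO_+)=0$ for $n<0$, and in particular $H^0(\CO_+)=0$. By your own count, $H^{n+1}(\mathrm{gr}^1\CO)$ involves $H^n(\CO/\CO_+)$ tensored with the degree-one generators. So that vanishing is a \emph{consequence} of (b), and using it to prove (b) is circular.

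The spectral-sequence fallback fails for the same reason. For $i=1$ your bound gives $m=n$, so maximality of $n$ says nothing. The differential $d_1$ from $E_1^{0,n}=H^n(\CO/\CO_+)$ to $E_1^{1,n+1}=H^{n+1}(\CO_+/\CO_+^2)$ is the weight-one part of $\delta$ on $\CO/\CO_+$; in BRST terms, it is the action of the ghosts on Koszul cohomology. Nothing prevents it from being injective.

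\paragraph{A counterexample satisfying the stated hypotheses.} Take
\begin{equation*}
  \CO=\Lambda[\epsilon,w], \qquad \deg\epsilon=-1, \quad \deg w=1 .
\end{equation*}
Give it the Poisson bracket whose only nonzero value on generators is $\{w,\epsilon\}=\{\epsilon,w\}=\epsilon w$. Every bracket with $\epsilon w$ vanishes, so the Jacobi identity holds trivially. Set $S=w\in\CO_+$.
\begin{itemize}
\item Then $\{S,S\}=0$, $\delta\epsilon=\epsilon w$, $\delta w=0$, and $F^2\CO=0$.
\item It is even semi-free: it is the Chevalley--Eilenberg complex of $\R$ acting on $\Lambda[\epsilon]$ with weight one.
\item Since $\delta\epsilon\neq0$ and $\CO^{\le -2}=0$, we get $H^{<0}(\CO)=0$, so (a) holds.
\item However, $\CO/\CO_+=\Lambda[\epsilon]$ has zero differential, so $H^{-1}(\CO/\CO_+)=\R\epsilon\ne0$ and (b) fails.
\item Concretely, $H^0(\CO_+)=\R\,\epsilon w$, and the connecting map $\epsilon\mapsto\epsilon w$ is an isomorphism.
\end{itemize}
So (a)$\Rightarrow$(b) cannot follow from the filtration and degree counting. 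Under the hypotheses as stated it is false. Any correct version needs an additional hypothesis excluding this ghost action, and your proposal does not supply one.

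\paragraph{Tightening (b)$\Rightarrow$(a).} Two points need fixing.
\begin{itemize}
\item Writing $\mathrm{gr}^i\CO$ as $\CO/\CO_+$ tensored with generators of degree $\ge i$ uses that $\CO$ is semi-free over $\CO/\CO_+$ on positive-degree generators. This is the $L_\infty$-algebroid picture, and you should state it as a hypothesis.
\item The differential on $\mathrm{gr}^i\CO$ is not just the $\CO/\CO_+$-differential plus higher-filtration terms. It also has a weight-preserving part that moves generators, for example $\eta^a\mapsto Z^a_\alpha\eta^\alpha$ for reducible constraints.
\end{itemize}
That extra part strictly raises the total degree of the generators, because the coefficients have nonpositive degree. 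A secondary filtration by generator degree, with a finiteness assumption so that it converges, then gives $H^n(\mathrm{gr}^i\CO)=0$ for $n<i$. Hence $H^{\le0}(\CO_+)=0$, and your exact sequence finishes the argument.
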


If $\CO$ satisfies these conditions, the dg supermanifold $\M$ is a
resolution of $\M_0$. Such resolutions are known to exist, and are
unique up to quasi-isomorphism of their dg Poisson superalgebras of
functions, under very general hypotheses \citep{HT,Stasheff}. The
deformation quantization of $\M$ is a dg superalgebra, which plays the
role of the derived deformation quantization of $\M_0$.

\subsection{The BRST formalism}

Let $\mu:M\to\g^\vee$ be a moment map. The BRST resolution $M(\mu)$ of the zero
locus $\mu^{-1}(0)\subset M$ is a dg supermanifold whose underlying graded
supermanifold is
\begin{equation*}
  M \times T^*\g[1] \cong M \times \g^\vee[-1] \times \g[1] .
\end{equation*}
Choose a basis $e_i$ of $\g$, with parity $\epsilon_i\in\{0,1\}$. Let
$C_{ij}^k$ be the structure coefficients of $\g$:
\begin{equation*}
  [e_i,e_j] = \sum_k C_{ij}^k e_k .
\end{equation*}
The graded supermanifold $\g[1]$ has coordinates $c^i$ of degree $1$,
the ghosts, with internal parity $\epsilon_i$, and total parity
$1-\epsilon_i$.  The graded supermanifold $\g^\vee[-1]$ has coordinates
$b_i$ of degree $-1$, the ghost momenta, also with internal parity
$\epsilon_i$ and total parity $1-\epsilon_i$. The symplectic form on
$T^*\g[1]$ equals
\begin{equation*}
  \sum_i (-1)^{\epsilon_i} db_i \wedge dc^i ,
\end{equation*}
with Poisson brackets $\{b_i,c^j\}=\delta_i^j$ and
$\{b_i,b_j\}=\{c^i,c^j\}=0$. The graded superalgebra of functions of
$M(\mu)$ is
\begin{equation*}
  \CO(M) \otimes \Sym\g[1] \otimes \Sym\g^*[-1] \cong \CO(M)[b_i,c^i] .
\end{equation*}
The differential is the Hamiltonian vector field associated to the
function
\begin{equation*}
  S = \sum_i \mu_i c^i + \frac{1}{2} \sum_{ijk} (-1)^{\epsilon_j} C_{ij}^k b_k c^j c^i .
\end{equation*}
The associated differential $\delta=\{S,-\}$ is
\begin{equation*}
  \delta = \sum_i (-1)^{\epsilon_i} \mu_i \frac{\partial}{\partial b_i} + \sum_i c^i \, \biggl(
  \{\mu_i,-\} + \sum_{jk} (-1)^{\epsilon_j+1} C_{ij}^k b_k \frac{\partial}{\partial b_j} \biggr)
  + \frac{1}{2} \sum_{ijk} (-1)^{\epsilon_j} C_{ij}^k c^j c^i \frac{\partial}{\partial c^k} .
\end{equation*}
The dg superalgebra of functions on the dg coisotropic supermanifold
$M(\mu)_-$ is the Koszul complex
$\CO(M\times\g^\vee[-1])\cong\CO(M)[b_i]$, with differential
\begin{equation*}
  \delta_0 = \sum_i (-1)^{\epsilon_i} \mu_i \frac{\partial}{\partial b_i} .
\end{equation*}
The complex as a whole may be identified with the Chevalley--Eilenberg
complex
\begin{equation*}
  C^*(\g,\CO(M\times\g^\vee[-1]))
\end{equation*}
of the dg $\g$-supermodule $\CO(M\times\g^\vee[-1])$.

The BRST dg symplectic manifold $M(\mu)$ associated to a moment map is
not a resolution of the zero-locus $\mu^{-1}(0)$ unless the zero-locus
is a complete intersection. An example where it is a resolution is the
massless relativistic particle, with $M=T^*U$,
$\Omega=dp_a \wedge e^a - p_a \omega^a{}_b \wedge e^b + B$ and
$\mu=\half\eta_{ab}p^ap_b:M\to\R$. In this case, the differential is the
Hamiltonian vector field
\begin{equation*}
  \delta = - \half\eta_{ab}p^ap_b \frac{\partial}{\partial b} + c\,\eta^{ab} f^\mu_a \, p_b
  \Bigl( \frac{\partial}{\partial x^\mu} - f^\nu_c B_{\mu\nu} \frac{\partial}{\partial p_c} \Bigr)
\end{equation*}
associated to the function $S = \half\eta_{ab}p^ap_b \, c$.

We now specialize to the case of $N=1$ supersymmetric quantum
mechanics. Denote the ghosts for the basis $Q$ and $H$ of the
associated Lie superalgebra $\h$ by $\gamma$ and $c$ respectively, with
conjugate momenta $\beta$ and $b$. The function $S$ equals
\begin{equation*}
  S = q \, \gamma + h \, c - b \, \gamma^2 ,
\end{equation*}
and the associated differential $\delta=\{S,-\}$ is
\begin{equation*}
  \delta = - q \frac{\partial}{\partial\beta} + h \frac{\partial}{\partial b} + \gamma \, \biggl( Q + 2b
  \frac{\partial}{\partial\beta} \biggr) + c \, Q^2 - \gamma^2 \frac{\partial}{\partial c} .
\end{equation*}

For the $N=1$ spinning particle, the zero locus $\mu^{-1}(0)$ is the
superlightcone
\begin{equation*}
  \{q=0\} \cap \{h=0\} .
\end{equation*}
This locus is not a complete intersection, because of the relation
$q\,\Omega=0$, where $\Omega=\theta^1\ldots\theta^D$. This is reflected in the dg superalgebra
of functions on the dg supermanifold $M(\mu)$ having cohomology classes
in all negative degrees \citep{curved}. We recall the proof in
Section~2.

The main result of this paper is that even though $M(\mu)$ is not the
BFV resolution of the superlightcone, its deformation quantization has
vanishing cohomology in negative degrees. It is tempting to say that
the superlightcone, while not a classical complete intersection, is
nevertheless a quantum complete intersection.

\subsection{The Moyal bracket}

Let $M$ be a symplectic supermanifold, with associated Poisson tensor
\begin{equation*}
  P = \frac12\sum_{\mu\nu} P^{\mu\nu}\partial_\mu\wedge\partial_\nu
\end{equation*}
so that
\begin{equation*}
  \sum_{\mu\nu} (-1)^{|f||x^\nu|} P^{\mu\nu} \partial_\mu f \partial_\nu g = \{f,g\} .
\end{equation*}
Choose a torsion-free symplectic connection $\nabla$ on $M$. The Fedosov
quantization of the algebra $\CO(M)\[\hbar\]$ associated to $\nabla$ is an
associative product
\begin{equation*}
  f\circ g = \sum_{k=0}^\infty \hbar^k C_k(f,g) ,
\end{equation*}
where $C_0(f,g)=fg$, $C_1(f,g)=\half\{f,g\}$, and
$C_k(g,f) = (-1)^{|f||g|+k} C_k(f,g)$ is a bidifferential operator of
order $k$ in each variable. (To reproduce the Heisenberg commutation
relation, the expansion parameter $\hbar=h/2\pi i$ is understood to vary
from the usual definition $\hbar=h/2\pi$ by a factor of $-i$.)

In this article, we focus on the associated graded Lie bracket
\begin{equation*}
  [f,g] = f\circ g - (-1)^{|f||g|} g\circ f = 2 \sum_{\ell=0}^\infty \hbar^{2\ell+1}
  C_{2\ell+1}(f,g) .
\end{equation*}
This bracket was introduced by Moyal~\citep{Moyal}, who also found
explicit formulas for $f\circ g$ and $[f,g]$ to all orders when $M$ is a
symplectic vector space and $P^{\mu\nu}$ is constant:
\begin{align*}
  C_k(f,g) &= \frac{1}{2^kk!} \sum_{\substack{\mu_1\ldots\mu_k\\\nu_1\ldots\nu_k}}
  (-1)^{\sum_{i<j}|x^{\nu_i}||x^{\nu_j}|+|f|(|x^{\nu_1}|+\cdots+|x^{\nu_k}|)} \\[-10pt]
           & \qquad\qquad\qquad\qquad\qquad P^{\mu_1\nu_1} \ldots P^{\mu_k\nu_k} \nabla_{\mu_1} \ldots \nabla_{\mu_k} f \,
             \nabla_{\nu_1}\ldots \nabla_{\nu_k}g .
\end{align*}

We will use the Fedosov product on $M$, realized by extending the
covariant complete Weyl calculus for pseudodifferential operators
(Pflaum \citep{Pflaum}) to the supermanifold $M$. (This is analogous
to the construction of \citep{AS}, where the complete calculus with
normal ordering is used.) In combination with the Moyal product in the
coordinates $\{\beta,b,\gamma,c\}$, this gives the Fedosov product on the
symplectic graded supermanifold $M(\mu)$. The function
\begin{equation*}
  S_\hbar = q \, \gamma + \tfrac{1}{2\hbar} [q,q] \, c - b \, \gamma^2 \in \CO(M(\mu))\[\hbar\] =
  \CO(M)[\beta,\gamma,b,c]\[\hbar\]
\end{equation*}
satisfies the Maurer--Cartan equation
\begin{equation*}
  \half [S_\hbar,S_\hbar] = S_\hbar\circ S_\hbar=0 ,
\end{equation*}
and we may consider the differential
\begin{equation*}
  \delta_\hbar = \tfrac{1}{\hbar} [S_\hbar,-] = \delta_0 + \hbar^2 \delta_1 + \cdots
\end{equation*}
it induces on $\CO(M(\mu))\(\hbar\)$. We call the cohomology of this complex
the Moyal cohomology of the BRST theory. It may be calculated by a
spectral sequence whose $E_1$ page is the cohomology of $\delta_0$, or the
cohomology of the BRST formalism. We show that for the $N=1$ spinning
particle, the negative degree cohomology classes in the cohomology of
$\delta_0$ do not survive to the $E_2$ page of the spectral sequence. We
discuss the case of constant frame $e^a_\mu$ and vanishing magnetic
potential $A_\mu$ as a warm-up in Section~3, and the general case in
Section~4.

Barnich and Grigoriev \citep{BG} relate the cohomology of the dg
manifold $M(\mu)$ to the cohomology of local observables in the
classical BV formalism for the AKSZ model associated to $M(\mu)$. This
raises the question of the role of the Moyal cohomology of $M(\mu)$ in
the BV formalism. For the relativistic particle, this question is
answer by by Grady, Li and Li \citep[Theorem 3.22]{GLL}, who relate
the Moyal cohomology of a symplectic manifold to the cohomology of
local observables in the quantum BV formalism for the associated AKSZ
model. The generalization of their work to symplectic dg
supermanifolds would illuminate the role that our result play in the
quantum BV formalism for the $N=1$ spinning particle.

In this article, we have focused on the cohomology of the complex of
operators of the quantized BFV theory. Boffo et al.\ \citep{BGHS} have
considered a related problem: they show that the complex of states of
the theory has vanishing cohomology in negative degree. This
calculation was the motivation for the present work.

\section{The Poisson cohomology of the {\boldmath$N=1$} spinning
  particle}

In this section, we review the calculation of the cohomology of
$\CO(M(\mu))$ for the $N=1$ spinning particle.  In these calculations,
we formally invert the variable $\gamma$: all negative powers of
$\gamma$ disappear in the course of the calculation, but working with
negative powers of $\gamma$ at intermediate steps in the calculation
simplifies the construction of cocycles.

The superscheme $M(\mu)_0\subset M(\mu)$ may be identified with
$\mu^{-1}(0)$, and the dg superalgebra of functions on the dg
superscheme $M(\mu)_+\subset M(\mu)$ may be identified with the
Chevalley--Eilenberg complex $C^*(\h,\CO(\mu^{-1}(0)))$. We have a
short exact sequence
\begin{equation*}
  \begin{tikzcd}[column sep=1.5em]
    0 \rar & \CO(M(\mu))_- \rar & \CO(M(\mu)) \rar &
    C^*(\h,\CO(\mu^{-1}(0))) \rar & 0 ,
  \end{tikzcd}
\end{equation*}
and the cohomology of $\CO(M(\mu))$ has two sources:
\begin{equation*}
  H^k(\CO(M(\mu))) \cong
  \begin{cases}
    H^k(\h,\CO(\mu^{-1}(0))) , & k\ge0 , \\
    H^k(\CO(M(\mu))_-) , & k<0 .
  \end{cases}
\end{equation*}
We focus on the cohomology in negative degree.

Let $\Omega=\theta^1\ldots\theta^D$. The divided powers of $\beta$ are
$\beta^{[k]}=\beta^k/k!$ for $k\ge0$, and $\beta^{[k]}=0$ for $k<0$.

Given a function $f\in\CO(U\times\g[1])$ of the
coordinates $\{x^\mu\}\cup\{\gamma,c\}$, introduce the functions
\begin{align*}
  \xi_k(f) &= - \gamma^{-1} c f \Omega \beta^{[1-k]} , &
  \eta_k(f) &= \gamma^{-1} f \Omega \beta^{[-k]} - \gamma^{-1} c \{\{q,f\},\Omega\} \beta^{[1-k]} ,
  \\
  X_k(f) &= \{S,\xi_k(f)\} , & Y_k(f) &= \{S,\eta_k(f)\} .
\end{align*}

By inspection, $q\Omega=\{q,f\}\Omega=\{h,\Omega\}=0$ vanish, and
\begin{equation*}
  \{q,\Omega\} = \sum_{a=1}^D (-1)^a p_a \, \theta^1 \ldots \widehat{\theta}{}^a \ldots \theta^D .
\end{equation*}
Since $q\Omega=0$, we see that
\begin{align*}
  \{h,f\}\Omega &= \{q,\{q,f\}\} \Omega = \{\{q,f\},q\} \Omega = \{\{q,f\},q \Omega\}
             + q \{ \{q,f\} , \Omega \} \\
           &= q \{ \{q,f\} , \Omega \} .
\end{align*}
Using these formulas, we calculate that
\begin{align*}
  X_k(f) &= \bigl( 2cb\beta^{[-k]} + \gamma \beta^{[1-k]} \bigr) f \Omega
           + c f \{q,\Omega\} \beta^{[1-k]} \intertext{and}
  Y_k(f) &= 2 b f \Omega \beta^{[-k-1]} + f \{q,\Omega\} \beta^{[-k]}
           + \bigl( 2cb\beta^{[-k]} + \gamma \beta^{[1-k]} \bigr) \{\{q,f\},\Omega\} \\
         &\qquad + c \{q,\{\{q,f\},\Omega\}\} \beta^{[1-k]} .
\end{align*}

\begin{theorem}
  \begin{equation*}
    H^k(\CO(M(\mu))_-) \cong
    \begin{cases}
      \{ X_k(f) + Y_k(g) \mid f,g\in\CO(U) \} , & k<0 , \\
      \{ X_0(f) \mid f\in\CO(U) \} , & k=0 , \\
      0 , & k>0 .
    \end{cases}
  \end{equation*}
\end{theorem}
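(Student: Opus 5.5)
Our plan is to compute the cohomology of the ideal $\CO(M(\mu))_-$ with a spectral sequence that filters by ghost number in $\gamma,c$. Write $\delta=\delta_0+\delta'$. Here $\delta_0=-q\,\partial_\beta+h\,\partial_b$ is the Koszul differential, and $\delta'$ collects the terms containing $\gamma$ or $c$. Filtering by total polynomial degree in $(\gamma,c)$ makes $\delta_0$ the associated graded differential. So the $E_1$ page is $H(\CO(M)[\beta,b],\delta_0)\otimes\R[\gamma,c]$, restricted to the part lying in $\CO_-$.

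The first step is to compute the Koszul homology of the pair $(q,h)$ on $\CO(M)=\CO(T^*U)\otimes\Lambda[\theta]$, with $\beta$ even of degree $-1$ paired with the odd element $q$, and $b$ odd paired with $h$. We would do this in two stages. First take homology with respect to $h\,\partial_b$. Since $h=\half\eta^{ab}p_ap_b+(\text{nilpotent})$ is a non-zero-divisor on $\CO(M)$, this stage leaves only $\CO(M)/(h)$ in $b$-degree $0$. Second, take homology of $-q\,\partial_\beta$ on $\CO(M)/(h)[\beta]$. This is the Koszul complex of an odd element acting on a polynomial algebra in the even variable $\beta$. In $\beta$-degree $k$ its homology is $\ker q/\operatorname{im} q$ on $\CO(M)/(h)$, with an extra cokernel term in $\beta$-degree $0$. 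We must show that $\ker q/\operatorname{im} q$ on $\CO(M)/(h)$ is exactly $\CO(U)\cdot\Omega$. The guiding idea is that multiplication by $q=p_a\theta^a$ behaves like a Koszul/de Rham differential in the $\theta$'s with coefficients $p_a$. Its homology is therefore concentrated in top $\theta$-degree, where $q\Omega=0$, and is supported where $p=0$. Modulo $(h)$, and with $p$ set to zero on $\Omega$ (because $p_a\Omega=\pm q\,\theta^1\cdots\widehat{\theta^a}\cdots\theta^D$ up to signs), this leaves exactly $\CO(U)\Omega$. Each class is represented by $\Omega\beta^{[n]}$ with coefficients in $\CO(U)$, for every $n\ge0$. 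The representatives are $f\Omega\beta^{[n]}$ together with $b$-corrections. Their degree is $-n$ plus the ghost contribution.

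Next we run the spectral sequence in the ghosts $\gamma,c$. On the $E_1$ page the induced differential is the action $\gamma Q+cH-\gamma^2\partial_c$ of the Chevalley--Eilenberg differential of $\h$ on the classes $f\Omega\beta^{[n]}$. The term $\gamma^2\partial_c$ pairs $c\cdot(\ldots)$ with $\gamma^2(\ldots)$. The Koszul-type contraction in $(\gamma,c)$ makes everything acyclic except a rank-two piece in each $\beta$-degree: one survivor linear in $c$ and one in the lowest $\gamma$-degree. These are precisely the leading terms of $X_k(f)$ and $Y_k(g)$. That they are genuine cocycles is automatic, since they are $\delta$ of the elements $\xi_k,\eta_k$. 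These elements involve $\gamma^{-1}$, so they are not coboundaries in the actual complex. The identities $q\Omega=\{h,\Omega\}=0$ and $\{h,f\}\Omega=q\{\{q,f\},\Omega\}$ recorded above guarantee that the corrections close up. Counting degrees with $\deg\beta=-1$, $\deg\gamma=\deg c=1$ and $\deg b=-1$, the classes $X_k$ and $Y_k$ land in degree $k$. For $k=0$ the divided power $\beta^{[-k-1]}$ vanishes and $Y_0$ fails to lie in the ideal $\CO_-$ (or is exact), leaving only $X_0$. For $k>0$ nothing survives.

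The main obstacle is the precise Koszul computation of $\ker q/\operatorname{im} q$ modulo $h$ in the curved setting, where $q$ and $h$ involve $\omega$ and $B$. We would handle it by a further filtration by polynomial degree in $p$. This reduces the computation to the flat symbol $q_0=p_a\theta^a$, $h_0=\half\eta^{ab}p_ap_b$, where it is an explicit Koszul computation. We would then argue that the spectral sequence degenerates, because the explicit cocycles $X_k,Y_k$ lift every $E_1$ class. A secondary difficulty is confirming that the final answer is independent of the formal inversion of $\gamma$. This holds because $X_k,Y_k$ are polynomial in $\gamma$.
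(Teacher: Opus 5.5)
\textbf{The $E_1$ page is wrong.} Your reduction of $E_1$ to the Koszul complex of $q$ on $\CO(M)/(h)$ is fine; it uses that $h$ is a non-zero-divisor. The next claim, that the $q$-homology there is $\CO(U)\,\Omega$, is false, and this is the heart of the matter.

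$\Omega$ does generate the $q$-cohomology of $\CO(M)$. But $h$ acts by zero on that cohomology: $h\Omega=\half\eta^{ab}p_ap_b\,\Omega$, and $p_a\Omega$ is a multiple of $q$. So the long exact sequence of $0\to\CO(M)\to\CO(M)\to\CO(M)/(h)\to0$ (first map multiplication by $h$) produces a second class, in $\theta$-degree $D-1$. It is represented by $\{q,\Omega\}$, which satisfies $q\{q,\Omega\}=2h\Omega\equiv0$ but does not lie in $q\,\CO(M)+h\,\CO(M)$.

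Restoring the $b$-term gives a second family of $\delta_0$-cocycles,
$B_k(f)=2bf\Omega\beta^{[-k-1]}+f\{q,\Omega\}\beta^{[-k]}$, alongside $A_k(f)=f\Omega\beta^{[-k]}$, with $f\in\CO(U)[\gamma,c]$. This family reflects the failure of $(q,h)$ to be a regular sequence. $B_k(g)$ is exactly the ghost-free leading term of $Y_k(g)$. So your $E_1$ page does not contain one of the two families you later say survive.

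\textbf{The ghost analysis then fails too.} On your $E_1$ page, $-\gamma^2\partial_c$ leaves $f_0\Omega\beta^{[n]}$ and $\gamma f_1\Omega\beta^{[n]}$. Nothing linear in $c$ survives, since $\gamma^2\partial_c$ is injective on $c\,\R[\gamma]$. Meanwhile $\gamma Q+2b\gamma\partial_\beta$ would act by zero, so you would wrongly keep $f_0\Omega\beta^{[n]}$, which is not the leading term of any cocycle.

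The paper puts $\gamma$ and $c$ in filtration degrees $2$ and $3$, so that $\delta_{(1)}=-\gamma^2\partial_c$ and $\delta_{(2)}=\gamma Q+2b\gamma\partial_\beta$ act on successive pages. The decisive step is $\delta_{(2)}A_k(f_0+\gamma f_1)=B_k(\gamma f_0)$. It cancels $A_k(f_0)$ against $B_k(\gamma f_0)$ and leaves $A_k(\gamma f)$ and $B_k(g)$, which lift to $X_{k+1}(f)$ and $Y_k(g)$.

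Your equal-weight filtration could also work, but only once the $B$-family is present. In that filtration the whole leading term of $X_{k+1}(f)$ is $A_k(\gamma f)+cB_k(f)$, so its $c$-linear part is again a $B$-class. Finally, it is not true that the spectral sequence degenerates because $X_k,Y_k$ lift every $E_1$ class. There are nonzero differentials, and only the classes that survive them lift.
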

\begin{proof}
  Filter the graded superspaces $\CO(M(\mu))_-$, placing the ghosts
  $\gamma$ and $c$ in filtration degree $2$ and $3$ respectively. The
  leading order of the differential on $\CO(M(\mu))_-$ equals
  \begin{equation*}
    \delta_{(0)} = - q \frac{\partial}{\partial\beta} + h \frac{\partial}{\partial b}
  \end{equation*}
  The cohomology of the differential $\delta_{(0)}$, the $E_1$ page of the
  spectral sequence for the filtration on $\CO(M(\mu))_-$, is easily
  calculated. Multiplication by $q$ gives a differential which, after
  conjugation by the Hodge star in the variables $\{\theta^a\}$, may be
  identified with the Koszul complex for the constraints
  $\{p_a=0\}$. The cohomology of this differential is spanned by the
  cocycle $\Omega$, multiplied by functions of the variables
  $\{x^\mu,\gamma,c\}$. We conclude that the cohomology of
  $\delta_{(0)}$ is the span of the following cocycles, where
  $f\in\CO(U)[\gamma,c]$ and $k<0$:
  \begin{align*}
    A_k(f) &= f \Omega \beta^{[-k]} , &
    B_k(f) &= 2b f \Omega \beta^{[-k-1]} + f \{q,\Omega\} \beta^{[-k]} .
  \end{align*}

  The differential $\delta_{(1)}$ of the $E_1$ page of the spectral
  sequence is
  \begin{equation*}
    \delta_{(1)} = - \gamma^2 \frac{\partial}{\partial c} .
  \end{equation*}
  Its cohomology is straightforward to calculate, and shows that $E_2$
  is spanned by the cocycles
  \begin{align*}
    \{ A_k(f_0+\gamma f_1) + B_k(g_0 + \gamma g_1) \mid f_0,f_1,g_0,g_1\in\CO(U) \}
    , \quad k<0 .
  \end{align*}
  
  The differential $\delta_{(2)}$ of the $E_2$ page of the spectral
  sequence is
  \begin{equation*}
    \delta_{(2)} = \gamma Q + 2b\gamma \frac{\partial}{\partial\beta} .
  \end{equation*}
  On the cocycles $A_k(f_0+\gamma f_1)$ and $B_k(g_0+\gamma g_1)$, the
  differential $\delta_{(2)}$ acts as
  \begin{align*}
    \delta_{(2)} A_k(f_0+\gamma f_1) &= B_k(\gamma f_0) , &
    \delta_{(2)} B_k(g_0+\gamma g_1) &= 0 .
  \end{align*}
  Its cohomology is spanned by $A_k(\gamma\,f)$ and $B_k(g)$, $f,g\in\CO(U)$.

  The $E_\infty$ page of the spectral sequence is isomorphic to the
  $E_3$ page. The classes $A_k(\gamma f)$ and $B_k(g)$ lift to
  $X_{k+1}(f)$ and $Y_k(g)$ respectively.
\end{proof}

Since the Lie superalgebra is free, its cohomology is easily
calculated: there is a quasi-isomorphism between the
Chevalley--Eilenberg complex $C^*(\h,\CO(\mu^{-1}(0)))$ and the complex
\begin{equation*}
  \begin{tikzcd}[column sep=2em]
    \CO(\mu^{-1}(0)) \rar{\gamma\,Q} & \gamma\,\CO(\mu^{-1}(0)) .
  \end{tikzcd}
\end{equation*}
In particular, $H^k(\CO(M(\mu)))$ vanishes for $k>1$.

\begin{corollary}
  \begin{equation*}
    H^k(\CO(M(\mu))) \cong
    \begin{cases}
      \{ X_k(f) + Y_k(g) \mid f,g\in\CO(U) \} , & k<0 , \\
      H^0(\h,\CO(\mu^{-1}(0))) \oplus \{ X_0(f) \mid f\in\CO(U) \} , & k=0 , \\
      H^1(\h,\CO(\mu^{-1}(0))) , & k=1 , \\
      0 , & k>1 .
    \end{cases}
  \end{equation*}
\end{corollary}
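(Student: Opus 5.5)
The plan is to combine the short exact sequence
\begin{equation*}
  0 \to \CO(M(\mu))_- \to \CO(M(\mu)) \to C^*(\h,\CO(\mu^{-1}(0))) \to 0
\end{equation*}
with the Theorem and with the quasi-isomorphism of $C^*(\h,\CO(\mu^{-1}(0)))$ with the two-term complex $\CO(\mu^{-1}(0))\xrightarrow{\gamma Q}\gamma\,\CO(\mu^{-1}(0))$. The quotient complex is concentrated in degrees $0$ and $1$, so the long exact sequence gives isomorphisms $H^k(\CO(M(\mu))_-)\cong H^k(\CO(M(\mu)))$ for $k<0$. Together with the Theorem, this yields the case $k<0$ immediately.

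For $k>1$, the Theorem gives $H^k(\CO(M(\mu))_-)=0$ for $k>0$, and the $\h$-cohomology vanishes in degrees above $1$. Hence $H^k(\CO(M(\mu)))=0$ for $k>1$, and for $k=1$ we get
\begin{equation*}
  0\to H^1(\CO(M(\mu)))\to H^1(\h,\CO(\mu^{-1}(0)))\to H^2(\CO(M(\mu))_-)=0 ,
\end{equation*}
where the left term is preceded by $H^1(\CO(M(\mu))_-)=0$. This gives the case $k=1$.

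The case $k=0$ is the main point. The relevant segment of the long exact sequence is
\begin{equation*}
  0\to H^0(\CO(M(\mu))_-)\to H^0(\CO(M(\mu)))\to H^0(\h,\CO(\mu^{-1}(0)))\xrightarrow{\partial} H^1(\CO(M(\mu))_-)=0 ,
\end{equation*}
whose left end is preceded by $H^{-1}(\h,\CO(\mu^{-1}(0)))=0$. So the sequence is short exact. We are working over $\R$-vector spaces, so it splits, giving $H^0(\CO(M(\mu)))\cong H^0(\h,\CO(\mu^{-1}(0)))\oplus\{X_0(f)\mid f\in\CO(U)\}$ by the Theorem.

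The only thing to check carefully is that the connecting maps vanish where I claimed. This holds automatically because the target groups $H^{k}(\CO(M(\mu))_-)$ for $k>0$ and the source groups $H^{k}(\h,\CO(\mu^{-1}(0)))$ for $k<0$ are zero. One should also confirm that the identification of $C^*(\h,\CO(\mu^{-1}(0)))$ with a complex concentrated in degrees $0,1$ is compatible with the quotient map. This is the quasi-isomorphism stated just before the Corollary, so it can be cited rather than reproved.
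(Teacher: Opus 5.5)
Your argument is correct and follows the paper's own (implicit) route: combine the short exact sequence, the Theorem, and the quasi-isomorphism of $C^*(\h,\CO(\mu^{-1}(0)))$ with a two-term complex in degrees $0$ and $1$. Your long exact sequence also makes explicit the degree-$0$ splitting, which the paper's informal remark that $H^k(\CO(M(\mu)))\cong H^k(\h,\CO(\mu^{-1}(0)))$ for $k\ge0$ glosses over, and your final compatibility check is unnecessary, since only the cohomology of the quotient enters.
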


\section{The Moyal cohomology of the {\boldmath$N=1$} spinning
  particle: flat case}

In this section, we calculate the Moyal cohomology of $M(\mu)$ in a
Euclidean background, where the moving frame $e^a_\mu$ is the Kronecker
delta $\delta^a_\mu$ and the magnetic potential $A_\mu$ vanishes. In this case,
the formula for $S_\hbar$ simplifies, since the Moyal bracket $[q,q]$ of
$q$ with itself equals the Poisson bracket $\hbar\{q,q\}$. Thus
$S_\hbar$ equals $S$:
\begin{equation*}
  S_\hbar = S = \frac{1}{2} \delta^{ab} p_ap_b\,c + p_a\theta^a\,\gamma - \gamma^2 b .
\end{equation*}
Since this is a cubic function, there are only two terms in the Moyal
bracket $\delta_\hbar=[S_\hbar,-]$: the leading term, which is the Poisson bracket,
and a differential operator of degree $3$:
\begin{align*}
  \delta_\hbar &= \hbar \delta_0 + \hbar^3 \delta_1 \\
      &= \hbar \{ S , - \} + \frac{\hbar^3}{4} \biggl( \frac{1}{2} \delta^{ab}
        \frac{\partial^3}{\partial x^a\partial x^b\partial b} - \delta^{ab} \frac{\partial^3}{\partial x^a \partial \theta^b \partial\beta} -
        \frac{\partial^3}{\partial\beta^2\partial c} \biggr) .
\end{align*}
By the equation $\delta_\hbar^2=0$, we see that
$\delta_0\delta_1+\delta_1\delta_0=0$ and $\delta_1^2=0$. Thus, in the flat case, the
differential $\delta_\hbar$ is the total differential of a double complex. We
now study the spectral sequence of this double complex: this is
associated to the filtration where $\hbar$ is given filtration degree $1$.

We have already calculated the cohomology of $\delta_0$ in the previous
section. In negative degrees, it is spanned by
$X_k(f)=\delta_0\xi_k(f)$ and $Y_k(f)=\delta_0\eta_k(f)$, $k>0$, and in degrees
$0$ and $1$, it is spanned by $X_0(f)$ and
$H^*(\h,\CO(\mu^{-1}(0)))$. The cocycles spanning the summand
$H^*(\h,\CO(\mu^{-1}))$ of $H^*(\CO(M(\mu)))$ have no dependency on
$\beta$ or $b$, and hence $\delta_1$ vanishes on $H^*(\h,\CO(\mu^{-1}(0)))$.

The following calculation is the main result of this article, and
completes the proof that the Moyal cohomology vanishes outside degrees
$0$ and $1$ in this special case.
\begin{lemma}
  \begin{equation*}
    \delta_1\xi_k(f) = \tfrac{1}{4} \eta_{k-1}(f)
  \end{equation*}
\end{lemma}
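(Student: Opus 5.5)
The plan is to apply the explicit third-order operator
\begin{equation*}
  \delta_1 = \frac14 \biggl( \frac12 \delta^{ab}
  \frac{\partial^3}{\partial x^a\partial x^b\partial b} - \delta^{ab}
  \frac{\partial^3}{\partial x^a \partial \theta^b \partial\beta} -
  \frac{\partial^3}{\partial\beta^2\partial c} \biggr)
\end{equation*}
directly to $\xi_k(f) = -\gamma^{-1} c f \Omega \beta^{[1-k]}$, term by term. Throughout, $f\in\CO(U)$ depends only on $x$. This is the case relevant to the theorem of Section~2, since $f$ and $g$ there lie in $\CO(U)$.

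First, $\xi_k(f)$ does not involve $b$, so the first term of $\delta_1$ vanishes on it.

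Next comes the third term. Since $\partial_c$ removes the single factor of $c$, and $\partial_\beta^2 \beta^{[1-k]} = \beta^{[-1-k]}$ by the divided power convention, this term contributes
\begin{equation*}
  -\tfrac14 \cdot \bigl(-\gamma^{-1} f\Omega\beta^{[-1-k]}\bigr)\cdot(\pm1) .
\end{equation*}
The sign comes from moving $\partial_c$ past $\gamma^{-1}$, which is even in total parity, and from the conventions for $c$. I expect this to give exactly $\tfrac14\gamma^{-1} f\Omega\beta^{[-(k-1)]}$, the first summand of $\eta_{k-1}(f)$.

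Finally, the middle term gives
\begin{equation*}
  -\tfrac14\cdot\bigl(-\gamma^{-1}c\bigr)\,\partial_a f\,\partial_{\theta^a}\Omega\,\beta^{[-k]} ,
\end{equation*}
up to signs from commuting $\partial_{\theta^a}$ and $\partial_\beta$ past $c$ and $\gamma^{-1}$. It must be matched with the second summand of $\eta_{k-1}(f)$, namely $-\gamma^{-1}c\{\{q,f\},\Omega\}\beta^{[2-k]}$ after the index shift. Before computing, I would double-check the index: $\eta_{k-1}$ carries $\beta^{[1-(k-1)]}=\beta^{[2-k]}$, whereas $\partial_\beta\beta^{[1-k]}=\beta^{[-k]}$. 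So either the intended power in the statement is $\beta^{[-k]}$, or the definition of $\eta$ should be read with shifted indices. I would resolve this discrepancy first, probably by taking the formula for $\eta_k$ with the convention that makes $\delta_1\xi_k=\frac14\eta_{k-1}$ consistent, or by rechecking the conventions.

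In the flat case one has $\{q,f\}=\theta^a\partial_a f$. Then
\begin{equation*}
  \{\theta^a\partial_af,\Omega\} = \pm\,\partial_af\,\{\theta^a,\Omega\} = \pm\,\partial_a f\,\delta^{ab}\partial_{\theta^b}\Omega .
\end{equation*}
The term $\theta^a\{\partial_af,\Omega\}$ vanishes because $\Omega$ is independent of $p$. This shows the middle term of $\delta_1$ reproduces $\{\{q,f\},\Omega\}$ with the correct coefficient.

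The main obstacle is sign bookkeeping. The contributions involved are the super-signs in the Moyal formula for odd $\theta$, $\beta$ and $c$, the total parities of $\gamma^{-1}$ and $\Omega$ (which depend on $D$), and the divided-power index shift just discussed. I would fix an ordering convention, with all derivatives acting from the left and $\Omega$ kept at the right, and verify each sign against the known relation $\delta_0\delta_1+\delta_1\delta_0=0$ as a consistency check.
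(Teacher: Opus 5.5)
\textbf{Same route as the paper, but the proposal does not close.} Your method matches the paper's proof: apply the three terms of $\delta_1$ to $\xi_k(f)$ directly, note that the $\partial_b$ term dies, match $\partial^3/\partial\beta^2\partial c$ with the $f\Omega$ part of $\eta$, and match $\partial^3/\partial x^a\partial\theta^b\partial\beta$ with the $c\{\{q,f\},\Omega\}$ part. The proposal stops, however, at the two places where the verification has content.

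\textbf{Indexing.} The discrepancy you found is real, and it affects both terms, not only the middle one. Your own computation gives $\beta^{[-1-k]}$ in the third term. The first summand of $\eta_{k-1}(f)$ carries $\beta^{[-(k-1)]}=\beta^{[1-k]}$. So your claim that the third term is ``exactly'' that summand contradicts what you just computed. Both terms are off by the same shift of two, and a degree count settles it:
\begin{itemize}
\item $\delta_1$ has degree $+1$;
\item $\xi_k(f)$ has degree $k-1$;
\item $\eta_j(f)$ has degree $j-1$.
\end{itemize}
So with the definitions of Section~2 the identity must read $\delta_1\xi_k(f)=\tfrac14\eta_{k+1}(f)$. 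The paper's proof silently uses the labelling of Section~4, $\xi_k(f)=-\gamma^{-1}\beta^{[k]}cf\Omega$, where the subscript equals the power of $\beta$. With that labelling the derivatives produce $\beta^{[k-1]}$ and $\beta^{[k-2]}$, and $\eta_{k-1}$ is the right label. You need to commit to one of these conventions rather than leave the question open.

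\textbf{Signs.} The signs you leave as $\pm$ are fixed by parity alone: $\gamma$ and $\beta$ are even in total parity, $c$ and $\theta^b$ are odd, and $f$ is even.
\begin{itemize}
\item Third term: $\partial_c$ produces no sign, so $\partial^3\xi_k(f)/\partial\beta^2\partial c=-\gamma^{-1}f\Omega\beta^{[-1-k]}$, contributing $+\tfrac14\gamma^{-1}f\Omega\beta^{[-1-k]}$.
\item Middle term: $\partial_{\theta^b}$ has to pass the odd $c$, so
\begin{equation*}
  \frac{\partial^3\xi_k(f)}{\partial x^a\partial\theta^b\partial\beta}
  = +\gamma^{-1}c\,\partial_af\,\partial_{\theta^b}\Omega\,\beta^{[-k]} .
\end{equation*}
Since $\{\theta^a,-\}=\delta^{ab}\partial_{\theta^b}$ and $\{\partial_af,\Omega\}=0$, one has $\{\{q,f\},\Omega\}=\delta^{ab}\partial_af\,\partial_{\theta^b}\Omega$ with a plus sign. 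The middle term therefore contributes $-\tfrac14\gamma^{-1}c\{\{q,f\},\Omega\}\beta^{[-k]}$.
\end{itemize}
The sign from passing $c$ is exactly what produces the relative minus sign in $\eta$. Your displayed expression $-\tfrac14(-\gamma^{-1}c)\,\partial_af\,\partial_{\theta^a}\Omega\,\beta^{[-k]}$ omits it and would give the wrong relative sign.

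\textbf{Result once both points are fixed.}
\begin{equation*}
  \delta_1\xi_k(f)=\tfrac14\bigl(\gamma^{-1}f\Omega\beta^{[-k-1]}-\gamma^{-1}c\{\{q,f\},\Omega\}\beta^{[-k]}\bigr) .
\end{equation*}
This is $\tfrac14\eta_{k+1}(f)$ in the Section~2 labelling, and it is the lemma as stated once you adopt the relabelling used in the paper's proof.
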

\begin{proof}
  Since $\partial\xi_k(f)/\partial b=0$, we see that
  \begin{align*}
    \frac{\partial^3\xi_k(f)}{\partial x^a\partial x^b\partial b} &= 0 \intertext{We have}
    \frac{\partial^3\xi_k(f)}{\partial x^a\partial\theta^b\partial\beta} &= \gamma^{-1} c \biggl(
    \delta^{ab} \frac{\partial f}{\partial x^a} \frac{\partial\Omega}{\partial\theta^b} \biggr) \beta^{[k-1]}
    = \gamma^{-1} c  \{ \{ q,f \} , \Omega \} \beta^{[k-1]} \intertext{and}
    \frac{\partial^3\xi_k(f)}{\partial\beta^2\partial c} &= - \gamma^{-1} f \Omega \beta^{[k-2]} .
    \qedhere
  \end{align*}
\end{proof}

\begin{corollary}
  \begin{equation*}
    \delta_1X_k(f) = - \tfrac{1}{4} Y_{k-1}(f)
  \end{equation*}
\end{corollary}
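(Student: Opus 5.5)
The plan is to deduce the Corollary from the preceding Lemma using only the anticommutation relation $\delta_0\delta_1+\delta_1\delta_0=0$, which follows from $\delta_\hbar^2=0$ in the flat case. By definition $X_k(f)=\{S,\xi_k(f)\}=\delta_0\xi_k(f)$ and $Y_{k-1}(f)=\delta_0\eta_{k-1}(f)$. These are the cocycles of the previous section; in the flat case $S_\hbar=S$, so $\delta_0$ is exactly the Poisson differential used there.

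The computation is then one line:
\begin{equation*}
  \delta_1 X_k(f) = \delta_1\delta_0\xi_k(f) = -\delta_0\delta_1\xi_k(f)
  = -\tfrac14\,\delta_0\eta_{k-1}(f) = -\tfrac14\,Y_{k-1}(f) .
\end{equation*}
The second equality is the anticommutation relation, and the third is the Lemma.

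The one point needing care is that $\xi_k(f)$ and $\eta_{k-1}(f)$ involve $\gamma^{-1}$. The identity $\delta_0\delta_1+\delta_1\delta_0=0$ must therefore hold on the localized algebra $\CO(M)[\beta,b,c,\gamma^{\pm1}]$, not just on polynomials in $\gamma$. I would check this by noting the following:
\begin{enumerate}[a)]
  \item both $\delta_0$ and $\delta_1$ are differential operators with polynomial coefficients;
  \item neither involves $\partial/\partial\gamma$;
  \item consequently both commute with multiplication by $\gamma^{\pm1}$, up to the sign rule.
\end{enumerate}
The relation $\delta_0\delta_1+\delta_1\delta_0=0$ is an identity of differential operators, namely the $\hbar^4$ coefficient of $\delta_\hbar^2=0$. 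Because $\gamma$ is never differentiated, it extends from polynomials in $\gamma$ to Laurent polynomials: one multiplies through by a large power of $\gamma$ and cancels it.

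It remains to track signs. $\delta_1$ is an odd operator of degree $1$, being of the same parity and degree as $\delta_0$; each term has one of $\partial_b$, $\partial_\beta$, $\partial_c$ combined with even or odd derivatives appropriately. Hence the anticommutator, not the commutator, is the correct relation, and the minus sign in the Corollary comes from it. I expect no real obstacle beyond confirming this sign convention.
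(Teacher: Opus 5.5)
Your proposal is correct and is the paper's proof: the Corollary is the Lemma with $\delta_0$ applied to both sides, using $\delta_0\delta_1+\delta_1\delta_0=0$, which is the $\hbar^4$ coefficient of $\delta_\hbar^2=0$. Your extra check adds a justification the paper leaves implicit: neither operator differentiates in $\gamma$, so the identity still holds after inverting $\gamma$.
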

\begin{proof}
  This follows from the equation $\delta_0\delta_1+\delta_1\delta_0=0$.
\end{proof}

We conclude that the $E_2$ page of the spectral sequence, and hence
the Moyal cohomology itself, is concentrated in degrees $0$ and $1$.

\section{The Moyal cohomology of the {\boldmath$N=1$} spinning
  particle: general case}

In this section, we extend the calculation of the last section to the
general case, in the presence of curvature and a magnetic field. In
order to do this, we will identify the Fedosov deformation
quantization with the complete symbol calculus for differential
operators acting on spinors over the Riemannian manifold $U$.

Let $U$ be a manifold with affine connection $\nabla^U$, and let $E$ be a
vector bundle over $U$ with covariant derivative $\nabla^E$. Let
$\CD$ be a differential operator on $E$. Let $\pi^*\End(E)$ be the
pullback to the cotangent bundle $\pi:T^*U\to U$ of the bundle
$\End(E)$ over $U$. The complete symbol
\begin{equation*}
  \sigma(\CD) \in \Gamma(T^*U,\pi^*\End(E))
\end{equation*}
of $\CD$ is a section of $\pi^*\End(E)$ polynomial in the
fibers. Restricted to the fiber $T^*_xU$, $x\in U$, $\sigma(\CD)$ is the
complete Weyl symbol of the germ of the differential operator $\CD$ in
the normal coordinate system around $x$ defined by the connection
$\nabla^U$, with respect to the radial gauge on the vector bundle $E$. The
covariant complete Weyl symbol (Pflaum \citep{Pflaum}) is
characterized by the formula
\begin{equation*}
  (\CD f)(x) = \hbar^{-D} \int_{T_xU\times T^*_xU} e^{ip\cdot(x-y)/\hbar} \,
  \sigma(\CD)(\half(x+y),p) \, f(y) \, dy\,dp .
\end{equation*}
The Fedosov quantization of $\Gamma(T^*M,\pi^*\End(E))$ is given by
\begin{equation*}
  \sigma(\CD_1) \circ \sigma(\CD_2) = \sigma(\CD_1\CD_2) .
\end{equation*}

The Fedosov quantization of the supermanifold $M$ will be realized as
using the algebra of differential operators on $U$ acting on a spinor
bundle, with coefficients in $\C\[\hbar^{1/2}\]$. Let $C(D-1,1)$ be the
Clifford algebra of the inner-product space $\C^{D-1,1}$, with inner
product $\eta^{ab}/2$. Denote the generators of $C(D-1,1)$ by
$\{c^a\}$.

If $D=\dim(U)$ is even, the algebra $C(D-1,1)$ may be identified with
$\End(\S)$, where $\S$ is the spinor bundle, with fiber
$\C^{2^{D/2}-1|2^{D/2}-1}$ and connection
\begin{equation*}
  \nabla^\S_\mu = \partial_\mu - \tfrac{1}{2} \omega_{\mu ab} c^ac^b .
\end{equation*}
The odd coordinates $\theta^a$ of $M$ act on $\S$ by $\hbar^{1/2} c^a$.

If $D$ is odd, we consider the Clifford algebra $C(D,1)$ with
generators $\{c^0,c^a\}$; the Clifford algebra $C(D-1,1)$ is the
centralizer of $c^0$.  In this case, the spinor bundle has fiber
$\C^{2^{(D-1)/2}|2^{(D-1)/2}}$, and $\nabla^\S c^0=0$. In both the even and
the odd cases, we have $\nabla^\S\Omega=0$ and
\begin{equation*}
  \Omega\circ\Omega = (-1)^{\binom{D}{2}+1} = (-1)^{\bigl\lfloor \frac{D}{2} \bigr\rfloor+1} .
\end{equation*}
The additional factor of $-1$ is due to the Lorentzian signature of
the inner product $\eta^{ab}$.

The complete symbol of the Dirac operator $\hbar\D$ is
$\hbar^{-1/2}q$. It is proved by Pflaum \citep[Proposition 2.4]{Pflaum}
that the covariant complete Weyl symbol of the Laplacian
$-\eta^{ab}\nabla^2_{ab}$ associated to a vector bundle $E$ with covariant
derivative $\nabla$ is $\delta^{ab}p_ap_b$. Taking for $E$ a complex line bundle
with curvature the two-form $B$, and applying the Lichernowicz
formula, we see that the complete Weyl symbol of the operator
$\hbar^2\D^2=\half[\hbar\D,\hbar\D]$ equals
\begin{equation*}
  \half \<q,q\> = \half \{q,q\} + \tfrac{1}{4} \hbar^2 R ,
\end{equation*}
where $R$ is the scalar curvature of the Riemannian metric on $U$. (In
fact, the formula for the complete symbol of $\<q,q\>$, and certainly
the coefficient multiplying the scalar curvature, will not be used
further in our calculations.)

The following lemma follows from the explicit formula for the action
of the Clifford algebra $C(D-1,1)$ on the spinor bundle $\S$.
\begin{lemma}
  $[q, f\circ \Omega] = \half [[ q , f ] , \Omega ]$
\end{lemma}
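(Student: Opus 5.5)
The plan is to prove the identity at the level of operators, using the dictionary just set up: $q$ is the complete symbol of $\hbar^{1/2}\cdot\hbar\D$, $\Omega$ acts as the parallel Clifford element $\hbar^{D/2}c^1\cdots c^D$, and $\circ$ is composition. I take $f\in\CO(U)$, possibly tensored with the ghosts. The ghosts are central for the Fedosov product on the $M$-factor, so they only contribute Koszul signs, and I reduce to $f$ a function on $U$ acting by multiplication.

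First, I would expand both sides using the fact that $[q,-]$ is a graded derivation of $\circ$:
\begin{equation*}
  [q,f\circ\Omega]=[q,f]\circ\Omega+f\circ[q,\Omega] .
\end{equation*}
In the flat model $\D=c^a\partial_a$, and in general $\D=c^a f^\mu_a\nabla^\S_\mu$. Since $f$ commutes with Clifford elements, $[q,f]$ is the zeroth-order operator $\hbar^{3/2}c^a f^\mu_a\partial_\mu f$. This is exactly the operator whose symbol is $\hbar\{q,f\}$, with no higher Moyal corrections because $q$ is linear in $p$.

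Next, I would evaluate the bracket with $\Omega$, using two facts. The first is $\nabla^\S\Omega=0$, so that $\Omega$ commutes with the derivative part of $\D$. The second is the Clifford relation
\begin{equation*}
  c^a\Omega=(-1)^{D-1}\Omega c^a .
\end{equation*}
Together these show that for any operator of the form $Z=c^a\phi_a$, with $\phi_a$ scalar (multiplication operators or derivatives), the graded commutator $[Z,\Omega]=Z\circ\Omega-(-1)^{D}\Omega\circ Z$ collapses to a multiple of $Z\circ\Omega$; the multiple is $2$ with the signs as I count them. Applying this to $Z=[q,f]$ gives
\begin{equation*}
  \tfrac12[[q,f],\Omega]=[q,f]\circ\Omega .
\end{equation*}
Applying the same collapse to $Z=q$ expresses $[q,\Omega]$, and hence the second Leibniz term $f\circ[q,\Omega]$, as a multiple of $f\circ q\circ\Omega$.

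Comparing with the Leibniz expansion, the lemma is then equivalent to the vanishing of this second term, $f\circ[q,\Omega]$. This is the quantum counterpart of the classical relation $q\Omega=0$ that the paper used in Section~2 to derive $\{h,f\}\Omega=q\{\{q,f\},\Omega\}$. I expect this step to be the main obstacle. Literally, $c^a\Omega$ is a nonzero product of $D-1$ generators, so the term can only cancel once the paper's conventions are imposed: that $\Omega$ acts as $\hbar^{1/2}$-weighted Clifford elements on $\S$, and that $q$ is the \emph{Weyl} symbol, the symmetrized part of $c^a\cdot\hbar\nabla_a$.

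For that step I would first verify the identity in the flat case, where $q=p_a\theta^a$ and $\circ$ is the explicit Moyal product. There $q\circ\Omega$ terminates after the $C_1$ term, so everything reduces to checking that $\half\{q,\Omega\}$ and the $\theta$-contractions cancel against the terms of $\half[[q,f],\Omega]$. This flat computation is the same one used for $\delta_1\xi_k(f)$ in Section~3. I would then transport it to the curved case using the operator argument above, since $\nabla^\S\Omega=0$ removes all connection and curvature terms.

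The remaining bookkeeping is to track the parity of $\Omega$ (that is, $D$ mod~$2$) and, for odd $D$, the extra generator $c^0$. Since $c^0$ does not enter $q$ and $\nabla^\S c^0=0$, it should not affect the argument.
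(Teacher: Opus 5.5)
The step you flag as the main obstacle, the vanishing of $f\circ[q,\Omega]$, is where the argument breaks, and no choice of conventions repairs it. Your own collapse argument gives $[q,\Omega]=2\,q\circ\Omega$, and $q\circ\Omega\neq0$. In the flat case this is explicit. The function $q$ is linear in $\theta$ and $\Omega$ depends only on $\theta$, so the Moyal product terminates and
\begin{equation*}
  q\circ\Omega=q\Omega+\half\hbar\{q,\Omega\}=\half\hbar\{q,\Omega\} ,
\end{equation*}
where $\{q,\Omega\}$ is the nonzero function displayed in Section~2. On the operator side this is just your observation that $c^a\Omega$ is a nonzero product of $D-1$ generators. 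Weyl ordering and the $\hbar^{1/2}$ weights do not change it, so the flat check you propose would fail.

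Your Leibniz reduction and your collapse computation for $Z=[q,f]$ are both correct. Taken together they show that, with $[\cdot,\cdot]$ the graded commutator, the identity as printed is false:
\begin{equation*}
  [q,f\circ\Omega]-\half[[q,f],\Omega]=2f\circ q\circ\Omega=\hbar f\{q,\Omega\}+O(\hbar^2) .
\end{equation*}
Already at order $\hbar$ the left side is $\hbar\{q,f\Omega\}=\hbar f\{q,\Omega\}$, while the right side is $O(\hbar^2)$.

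The missing idea is that the quantum counterpart of $q\Omega=0$ is not $[q,\Omega]=0$. It is the vanishing of the graded \emph{anti}commutator, $q\circ\Omega+(-1)^D\Omega\circ q=0$, whose leading symbol is $2q\Omega$. This is exactly what $c^a\Omega=(-1)^{D-1}\Omega c^a$ and $\nabla^\S\Omega=0$ give, and it is the ``explicit calculation'' the paper starts from.

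Correspondingly, the identity the paper actually proves has the anticommutator on the left; its computation ends at $q\circ f\circ\Omega+(-1)^Df\circ\Omega\circ q$. This is also the form used later, where that combination in the bracket of $\gamma q$ with $\xi_k(f)$ is replaced by a multiple of the double bracket. The identity is
\begin{equation*}
  q\circ f\circ\Omega+(-1)^D f\circ\Omega\circ q=\half[[q,f],\Omega] .
\end{equation*}
You already have everything needed for it, without the Leibniz expansion. Since $f\circ\Omega=\Omega\circ f$ and $\Omega\circ q=-(-1)^D q\circ\Omega$, the left side equals $(q\circ f-f\circ q)\circ\Omega=[q,f]\circ\Omega$. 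Your collapse applied to $Z=[q,f]$ identifies this with $\half[[q,f],\Omega]$, which is essentially the paper's argument. So drop the attempt to kill $f\circ[q,\Omega]$, and read the lemma with the anticommutator.
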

\begin{proof}
  An explicit calculation shows that
  $q\circ\Omega=-(-1)^D\Omega\circ q$. Since $[f,\Omega]=0$, we see that
  \begin{align*}
    \half [[ q , f ] , \Omega ] &= \half (q\circ f-f\circ q)\circ\Omega -
    \half (-1)^D \Omega\circ(q\circ f-(-1)^Df\circ q) \\
    &= (q\circ f-f\circ q)\circ\Omega = q\circ f\circ\Omega + (-1)^Df\circ\Omega\circ q .
    \qedhere
  \end{align*}
\end{proof}

The function
\begin{equation*}
  S_\hbar = - q\gamma + \half \<q,q\>c - b\gamma^2 = S + \tfrac{\hbar^2}{4} R c
\end{equation*}
satisfies the Maurer--Cartan equation
\begin{equation*}
  S_\hbar \circ S_\hbar = 0 .
\end{equation*}
The Moyal cohomology is the cohomology with respect to the
differential $[S_\hbar,-]$.

We now imitate the calculations of Sections 1 and 2. We choose for
$\xi_k(f)$ the same expression as in the classical limit:
\begin{equation*}
  \fbox{\vphantom{$\Bigm|$}\ $\xi_k(f) = - \gamma^{-1}\beta^{[k]} c f \Omega$\ }
\end{equation*}
We have
\begin{align*}
  \<\gamma^2b,\xi_k(f)\>
  &= \<\gamma^2b,\gamma^{-1}\beta^{[k]}c\>\circ f\circ\Omega \\
  &= \bigl( \gamma \beta^{[k]} - 2 \beta^{[k-1]} bc \bigr)\circ f\circ \Omega + \tfrac{1}{4}
    \hbar^2 \gamma^{-1} \beta^{[k-2]} \circ f\circ \Omega .
\intertext{To calculate $\<S_\hbar,\xi_k(f)\>$, we proceed as follows:}
  \gamma q \circ \xi_k(f)
  &= \gamma \circ (\gamma^{-1}\beta^{[k]}c) \circ q \circ f\circ \Omega \\
  &= \bigl( \beta^{[k]} c - \half \hbar \gamma^{-1} \beta^{[k-1]} c \Bigr) \circ q \circ
    f\circ \Omega \\
  \xi_k(f) \circ \gamma q
  &= \bigl( - \gamma^{-1}\beta^{[k]}c \bigr) \circ \gamma \circ f \circ \Omega\circ q \\
  &= - \bigl( \beta^{[k]} c + \half \hbar \gamma^{-1} \beta^{[k-1]} c \bigr) \circ
    f\circ \Omega\circ q .
\intertext{It follows that}
  \<\gamma q,\xi_k(f)\>
  &= \hbar^{-1} \beta^{[k]} c \circ \bigl( q \circ f\circ \Omega - (-1)^d f\circ \Omega\circ q \bigr) \\
  &\quad - \half \gamma^{-1} \beta^{[k-1]} c \circ \bigl( q \circ f\circ \Omega + (-1)^d f\circ
    \Omega\circ q \bigr) \\
  &= \beta^{[k]} c \circ \<q,f\circ\Omega\>
    - \tfrac{1}{4} \hbar^2 \gamma^{-1} \beta^{[k-1]} c \circ \<\<q,f\>,\Omega\> .
\intertext{We see that}
  \< S_\hbar , \xi_k(f) \>
  &= \Bigl( \bigl( \gamma \beta^{[k]} - 2 \beta^{[k-1]} bc \bigr)\circ f\circ \Omega + \beta^{[k]}
    c \circ \<q,f\circ\Omega\> \Bigr) \\
  &\quad - \tfrac{1}{4} \hbar^2 \Bigl( \gamma^{-1} \beta^{[k-2]} \circ f\circ \Omega + \gamma^{-1}
    \beta^{[k-1]} c \circ \<\<q,f\>,\Omega\> \Bigr) .
\end{align*}

This calculation allows us to extend the proof of the lemma of
Section~2 to general backgrounds. The only change in the statement is
that we define $\eta_k(f)$ using the Moyal bracket and not the Poisson
bracket: it equals the coefficient of $\hbar^2$ in the Moyal bracket
$\<S_\hbar,\xi_{k+1}(f)\>$:
\begin{equation*}
  \fbox{\vphantom{$\Bigm|$}\ $\eta_k(f) = \gamma^{-1} \beta^{[k]} c \circ
    \<\<q,f\>,\Omega\> + \gamma^{-1} \beta^{[k-1]} \circ f\circ \Omega$\ }
\end{equation*}

\subsection*{Acknowledgements}

{\small
  The research contained in this article was performed while the
  author held a Visiting Professorship in the Department of
  Mathematics and Center for Geometry and Physics at Uppsala
  University, funded by the Knut and Alice Wallenberg Foundation.

The author gratefully acknowledge the support of the Mittag-Leffler
Institute: this article was conceived during the program
``Cohomological Aspects of Quantum Field Theory.'' We are also
grateful to the other participants in this program, and in particular
Eugenia Boffo, Martin Cederwall and Ivo Sachs, for fruitful
conversations on this work.

Markus Pflaum informed us of the covariant complete Weyl calculus used
in Section~3.}



\end{document}